\newtheorem{definition}{Definition}
\newtheorem{theorem}{Theorem}
\title{The Complexity of Data-Driven Norm Synthesis and Revision}
\author{%
Davide Dell'Anna\and
Natasha Alechina\and
Brian Logan\and
Maarten L\"{o}ffler\and\\
Fabiano Dalpiaz\and
Mehdi Dastani\\
\affiliations
Utrecht University
\emails
\{d.dellanna, n.a.alechina, b.s.logan, m.loffler, f.dalpiaz, m.m.dastani\}@uu.nl
}
\begin{document}

\maketitle

\begin{abstract}
Norms have been widely proposed as a way of coordinating and controlling the activities of agents in a multi-agent system (MAS). A norm specifies the behaviour an agent should follow in order to achieve the objective of the MAS. However, designing norms to achieve a particular system objective can be difficult, particularly when there is no direct link between the language in which the system objective is stated and the language in which the norms can be expressed. In this paper, we consider the problem of synthesising a norm from traces of agent behaviour, where each trace is labelled with whether the behaviour satisfies the system objective. We show that the norm synthesis problem is NP-complete.
\end{abstract}

\section{Introduction}

There has been a considerable amount of work on using norms to coordinate the activities of agents in a multi-agent system (MAS) \cite{Chopra//:18a}. Norms can be viewed as standards of behaviour which specify that certain states or sequences of actions in a MAS should occur (obligations) or should not occur (prohibitions) in order for the objective of the MAS 
to be realized \cite{DBLP:conf/kr/BoellaT04}. We focus on conditional norms with deadlines which express behavioral properties~\cite{TDM2009}.
Conditional norms are triggered (detached) in certain states of the MAS and have a temporal dimension specified by a deadline. The satisfaction or violation of a detached norm depends on whether the behaviour of the agent brings about a specified state before a state in which the deadline condition is true. Conditional norms are implemented in a MAS through 
enforcement. That is, violation of a norm results in the violating agent incurring a sanction, e.g., a fine; see, e.g., \cite{Alechina//:13c,dell2020runtime} for how to determine an appropriate level of sanction.

For many applications it is assumed that the MAS developer will design an appropriate norm to realise the system objective. However, this can be difficult, particularly when the internals of the agents are unknown, e.g., in the case of open MAS~\cite{artikis2001formal}, and when there is no direct connection between the language in which the system objective is stated and the language in which norms can be expressed. For example, one 
objective of a traffic system may be to avoid traffic collisions, but `not colliding' is not a property under direct agent control, and prohibition of collisions cannot be stated as a norm. 
A poorly designed norm may fail to achieve the system objective, or have undesirable side effects, e.g., the objective is achieved, but the autonomy of the agents is restricted unnecessarily. 

The increasing availability of large amounts of system behaviour data introduces the possibility of a new approach to the design of norms, namely the synthesis of norms directly from data collected during the execution of the system. 
For example, data may show that collisions always happen when an agent's speed is very high, allowing us to state a norm prohibiting agents from speeding too much.
%
In this paper, we consider the problem of synthesising norms from traces of agent behaviour, where each trace is labelled with whether the behaviour satisfies the system objective. We show that synthesising a norm (i.e., an obligation or a prohibition) that correctly classifies the traces (i.e., the norm is violated on traces where the behaviour does not satisfy the system objective, and is not violated on other traces) is an NP-complete problem. We also consider the problem of synthesizing a norm that is ``close'' to a target norm. 
This problem is relevant where there is an existing norm that does not achieve the system objective, but which is accepted, e.g., by human users of a system, and we wish a minimal modification that does achieve the objective. We show that the minimal norm revision problem is also NP-complete.

There has been a considerable amount of work on the automated synthesis of norms. For example, \cite{shoham1995social} consider the problem of synthesising a  \emph{social law} that constrains the behaviour of the agents in a MAS so as to ensure that agents in a \emph{focal} state are always able to reach another focal state no matter what the other agents in the system do. They show that synthesising a useful social law is NP-complete. 
In \cite{DBLP:journals/synthese/HoekRW07}, the problem of synthesising a social law is recast as an ATL model checking problem. They show that the problem of whether there exists a social law satisfying an objective expressed as an arbitrary ATL formula (feasibility) is NP-complete, while for objectives expressed as propositional formulae, feasibility (and synthesis) is decidable in polynomial time.
In contrast to the approach we present here, these and related approaches (e.g., \cite{FITOUSSI200061,WOOLDRIDGE2005396,DBLP:conf/ijcai/AgotnesHRSW07}) 
are used to synthesise norms 
at design time, and 
assume a complete model the agents' behaviour is available, e.g., in the form of a transition system or a Kripke structure.
Another strand of work focusses on the `online' synthesis of norms, where norms emerge from the interactions of agents in a decentralised way,  e.g., ~\cite{airiau2014emergence,savarimuthu2011norm}. Unlike our approach, such approaches typically assume that the agents are cooperative, and/or that some minimal standards of behaviour can be assumed. 
%
Closer to our work are approaches that use agents' behaviour to guide (centralised) norm synthesis. For example, \cite{morales2015online} present  algorithms for the online synthesis of compact \textit{action-based} norms when the behaviour of agents leads to undesired system states. 
In contrast, we consider norms that regulate patterns of behaviour.
%
\cite{DBLP:journals/tplp/CorapiRVPS11} and \cite{DBLP:conf/aamas/AthakraviCRVPS12} apply Inductive Logic Programming to norm synthesis and revision. They represent the system objective in terms of desired outcomes of event traces. In contrast to our approach, traces encode use cases provided by a human developer, and the system objectives are expressed in the same language as norms.

\section{Preliminaries}


In this section we give formal definitions of the behaviour of agents in the MAS 
and of conditional norms.

We assume a finite propositional language $L$ that contains propositions corresponding to properties of states of the MAS. A \emph{state} of the MAS 
is a propositional assignment. 
A conjunction of all literals (propositions or their negations) in a state $s$ will be referred to as a \emph{state description} of $s$. A propositional formula is a boolean combination of propositional variables. The definition of a propositional formula $\phi$ being true in a state $s$ is standard. 

A \emph{trace} is a finite sequence of states. 
We use the notation $(s_1,\ldots,s_k)$ for a trace consisting of states $s_1,\ldots,s_k$. We denote the $i$-\textit{th} state in a trace $\rho$ by $\rho[i]$. We assume that the behaviour exhibited by the agents in the MAS is represented by a set of finite traces $\Gamma$.
For example, a trace could be generated by the actions of all vehicles involved in a traffic accident.
We denote by $S(\Gamma)$ or simply by $S$ the set of states occurring in traces in $\Gamma$. Each subset $X$ of $S(\Gamma)$ is definable by
a propositional formula $\phi_X$ (a disjunction of state descriptions of states in $X$).  Note that the size of $\phi_X$ is linear in the size of $X$ (the sum of sizes of state descriptions of states in $X$). 
$\Gamma$ is partitioned into two sets $\Gamma_T$ (positive traces) and $\Gamma_F$ (negative traces). The partition is performed with respect to the system objective, which typically does not correspond directly to the properties expressible in $L$.  

The problem we wish to solve is how to generate a conditional norm which is
expressed using propositions from $L$ that is obeyed on traces in $\Gamma_{T}$ and violated on traces in $\Gamma_{F}$.

\begin{definition}[Conditional Norm]
A \emph{conditional norm} (over $L$) is a tuple $(\phi_C, Z(\phi_Z), \phi_D)$, where 
$\phi_C$, $\phi_Z$ and $\phi_D$ are propositional formulas over $L$, and
$Z\in\{P,O\}$ indicates whether the norm is a prohibition ($P$) or an obligation ($O$). 
\end{definition}
We refer to $\phi_C$ as the (detachment) condition of the norm, and $\phi_D$ as the deadline. $\phi_Z$ characterizes a state that is prohibited (resp.\ obligated) to occur after a state where the condition of the norm $\phi_C$ holds, and before a state where the deadline of the norm $\phi_D$ holds. We define the conditions for violation of norms formally below.

\begin{definition}[Violation of Prohibition]
A conditional prohibition $(\phi_C, P(\phi_P), \phi_D)$ is violated on a trace $(s_1,s_2,\ldots,s_m)$ if there are $i,j$ with $1 \leq i \leq j \leq m$ such that $\phi_C$ is true at $s_i$, $\phi_P$ is true at $s_j$, and there is no $k$ with $i < k < j$ such that $\phi_D$ is true at $s_k$.
\label{def:cond_norms}
\end{definition}

\begin{definition}[Violation of Obligation]
A conditional obligation $(\phi_C, O(\phi_O), \phi_D)$ is violated on a trace $(s_1,s_2,\ldots,s_m)$ if there are $i,j$ with $1 \leq i \leq j \leq m$ such
that $\phi_C$ is true at $s_i$, $\phi_D$ is true at $s_m$, and there is no
$k$ with $i \leq  k \leq j$ such that $\phi_O$ is true at $s_k$.
\end{definition}

A conditional norm is obeyed on a trace if it is not violated on that trace.
Violation conditions of conditional  norms can be expressed in Linear Time Temporal Logic (LTL) and evaluated on finite traces in linear time~\cite{DBLP:conf/atal/AlechinaDL14}.

\section{Complexity of Norm Synthesis}
\label{sec:complexity}

Given a set of agent behaviour traces $\Gamma$ partitioned into $\Gamma_T$ and $\Gamma_F$, we wish to synthesize a norm that correctly classifies each trace (that is, the norm is violated on all traces in $\Gamma_{F}$, and is not violated on any trace in $\Gamma_{T}$). Clearly, this is not always possible; two sets of traces may not be distinguishable by a single conditional norm (or even by a set of conditional
norms). For example: 
\[
\Gamma_{T} = \{ (s_1, s_2, s_3) \},\: \Gamma_{F} = \{ (s_1, s_1, s_2, s_3) \}
\]
cannot be distinguished by a conditional norm.

\subsection{Prohibition Synthesis}

Next we define formally the decision problem we call \emph{prohibition synthesis}.

\begin{definition}
The \emph{prohibition synthesis problem} is the following decision problem:
\begin{description}
\item[Instance] A finite set of propositions $L$;
a finite set of finite traces $\Gamma$ partitioned into $\Gamma_T$ and $\Gamma_F$, each trace given as a sequence of state
descriptions over $L$.
\item[Question] Are there three propositional formulas $\phi_C$, $\phi_P$, and $\phi_D$ over $L$ such that
\begin{description}
\item[Neg] every trace in $\Gamma_F$ violates $(\phi_C, P(\phi_P), \phi_D)$
\item[Pos] no trace in $\Gamma_T$ violates $(\phi_C, P(\phi_P), \phi_D)$
\end{description}
\end{description}
\end{definition}



The correspondence between sets of states and formulas over $L$ allows us to restate the prohibition synthesis problem as follows: given a set of positive traces $\Gamma_T$ and negative traces $\Gamma_F$, find three sets of states $X_C$, $X_P$, $X_D$
such that:
\begin{description}
\item[Neg] For every trace $\rho \in \Gamma_F$, exist $i$ and $j$ with $i \leq j$ such that
$\rho[i] \in X_C$, $\rho[j] \in X_P$, and there is no $k$ with $i < k < j$ such that $\rho[k] \in X_D$.
\item[Pos] For every trace $\rho \in \Gamma_T$, if for some $i$ and $j$, $i \leq j$, $\rho[i] \in X_C$,
$\rho[j] \in X_P$, then there exists $k$ such that $i < k < j$ and $\rho[k] \in X_D$.
\end{description}

\begin{theorem}\label{thm:prohibition-synthesis}
The prohibition synthesis problem is NP-complete.
\end{theorem}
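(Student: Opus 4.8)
The plan is to establish membership in NP and then NP-hardness by reduction from a known NP-complete problem.

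\medskip
\noindent\textbf{Membership in NP.} For the upper bound, I would observe that a certificate consists of the three sets $X_C, X_P, X_D \subseteq S(\Gamma)$ (equivalently, the formulas $\phi_C, \phi_P, \phi_D$, which by the remark in the Preliminaries have size linear in $|S(\Gamma)|$). Given such a certificate, checking the \textbf{Neg} and \textbf{Pos} conditions amounts to evaluating, for each trace, whether the prohibition is violated; since each trace is finite and the violation condition is a simple quantification over pairs of positions $i \le j$ with a gap check for $X_D$, this can be done in polynomial time (indeed linear per trace, as noted via the LTL evaluation result). So the problem is in NP.

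\medskip
\noindent\textbf{NP-hardness.} The harder and more interesting direction is the lower bound. The restated set-theoretic formulation suggests the source problem should involve choosing subsets subject to covering-type constraints on one side (the negative traces must \emph{all} be forced to violate) and blocking-type constraints on the other (no positive trace may be allowed to violate). This two-sided structure — a choice that must simultaneously satisfy positive ``hitting'' requirements and negative ``avoidance'' requirements — points naturally to a reduction from a satisfiability variant, and I would aim for a reduction from \textsc{3-SAT} (or possibly \textsc{Monotone 3-SAT} or \textsc{One-in-Three 3-SAT} if that makes the gadgets cleaner). The key design task is to encode a truth assignment to the Boolean variables as the choice of the sets $X_C$, $X_P$, $X_D$. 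A natural idea: introduce, for each propositional variable $x_i$ of the \textsc{3-SAT} instance, a pair of states (one representing $x_i \in X_P$, say, and one its complement), and use short gadget traces in $\Gamma_F$ to force exactly one of each pair into the relevant set, and gadget traces in $\Gamma_T$ to forbid inconsistent choices. Then, for each clause, build a negative trace whose only way to be violated (i.e., to contain a detachment state followed by a prohibited state with no intervening deadline state) corresponds to at least one of its literals being set true; and use positive traces to rule out ``shortcuts'' that would let a clause trace violate for the wrong reason.

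\medskip
\noindent The main obstacle I anticipate is controlling the \emph{interaction} between the three sets and the deadline mechanism: because a violation requires a $C$-state, then a $P$-state, with \emph{no} $D$-state strictly in between, the deadline set $X_D$ gives the adversary a way to ``repair'' traces, and one must ensure the gadgets cannot be sabotaged by a clever choice of $X_D$ that breaks negative traces while a different choice of $X_C, X_P$ is needed elsewhere — i.e., the gadgets must be \emph{rigid} enough that the only globally consistent choices are exactly the satisfying assignments. A clean way to handle this is to make certain states appear in multiple gadget traces so that forcing constraints propagate (the shared-state trick is already hinted at by the indistinguishability example $\{(s_1,s_2,s_3)\}$ vs.\ $\{(s_1,s_1,s_2,s_3)\}$, which shows how repeating a state changes violation behaviour). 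I would also need to verify carefully that the construction is polynomial in the size of the \textsc{3-SAT} instance — in particular that $L$, the number of traces, and the trace lengths are all polynomial — which should be routine once the gadgets are fixed. Finally, both directions of the correctness equivalence (satisfying assignment $\Rightarrow$ correct classifying norm, and back) must be checked against \emph{all} of $\Gamma_T$ and $\Gamma_F$, with the back direction being the place where the rigidity of the gadgets is actually used.
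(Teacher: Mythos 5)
Your membership argument is complete and matches the paper's: a polynomial-size certificate (the three sets of states, or equivalently the formulas $\phi_C,\phi_P,\phi_D$ of linear size) can be checked in polynomial time per trace. The hardness part, however, is a plan rather than a proof. You correctly identify 3SAT as the source problem and the right general shape of the reduction (a pair of states $u_i,v_i$ per variable, forcing traces on both the negative and positive side, polynomiality to be checked at the end), but the construction itself --- the one place where the difficulty you yourself flag (the adversary's ability to ``repair'' traces via $X_D$, the need for rigid gadgets) must actually be discharged --- is missing. As it stands, the lower bound is not established.

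Moreover, the tentative design choices you do commit to point in the opposite direction from the one that works. You propose to encode the truth assignment in $X_P$ and to make the clause gadgets \emph{negative} traces (violated iff some literal is true). This runs into a concrete obstacle: the violation condition forbids a deadline state \emph{strictly} between positions $i$ and $j$, so a $C$-state immediately followed by a $P$-state violates regardless of $X_D$. Consequently any positive trace in which the detachment state is adjacent to a candidate truth-value state forces that state out of $X_P$ outright, and it becomes very awkward to express ``at most one of $u_i,v_i$ is in $X_P$'' without accidentally forbidding both. The paper's construction sidesteps this entirely: it pins $X_C=\{s\}$ and $X_P=\{t\}$ with a few short forcing traces, keeps all $u_i,v_i$ out of $X_C$ and $X_P$, and encodes the truth assignment in the \emph{deadline} set $X_D$ ($u_i\in X_D$ means $x_i$ true, $v_i\in X_D$ means $x_i$ false; the negative trace $(s,v_i,t,s,u_i,t)$ forces at least one of the pair out of $X_D$, the positive trace $(s,v_i,u_i,t)$ forces at least one in). Each clause then becomes a \emph{positive} trace $(s,z_j,z_k,z_l,t)$, which is non-violating precisely when some $z$ lies in $X_D$, i.e., some literal of the clause is true. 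You would need either to adopt such a construction or to produce concrete gadgets for your $X_P$-based encoding and verify both directions of the equivalence against all of $\Gamma_T$ and $\Gamma_F$; neither is done in the proposal.
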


\begin{proof}
The prohibition synthesis problem is clearly in NP (a non-deterministic Turing machine can guess the sets and check in polynomial time that they satisfy the conditions). To prove that it is NP-hard, we reduce 3SAT (satisfiability of a set of clauses with 3 literals) to prohibition synthesis.

Suppose an instance of 3SAT (a set of clauses $C_1,\ldots,C_n$ over variables $x_1,\ldots,x_m$) is given. We generate an instance of the prohibition synthesis problem such that it has a solution iff $C_1,\ldots,C_n$ are satisfiable (each clause contains at least one true literal).
The set of states in the prohibition synthesis problem consists of two states $s$ and $t$
which are a technical device (intuitively they serve as the detachment condition and the violation of the prohibition), and for each variable $x_i$, two states $u_i$ and $v_i$, intuitively meaning that $x_i$ is true ($u_i$) or false ($v_i$). Below is
the rest of the construction. Comments in square brackets explain the
intuition behind it.

The set of negative traces $\Gamma_F$ contains:
\begin{itemize}
\item a two state trace $(s,t)$ [together with $s,t \not \in X_C \cap X_P$ below, this forces $s \in X_C$ and
$t \in X_P$];
\item for every variable $x_i$ in the input, a trace
 $(s, v_i, t, s, u_i, t)$ [this ensures that either
$v_i$ or $u_i$ are not in $X_D$].
\end{itemize}

The set of positive traces $\Gamma_T$ contains:
\begin{itemize}
\item a single state trace $(s)$ [so $s$ cannot be in $X_C \cap X_P$];
\item $(t)$ [so $t$ cannot be in $X_C \cap X_P$];
\item for every variable $x_i$ in the input:
$(s, v_i, u_i, t)$ [this means that either $v_i$ or $u_i$ are in $X_D$]; $(v_i)$; $(u_i)$; $(v_i,t)$; $(u_i,t)$; $(s,v_i)$;  $(s,u_i)$;
\item for every pair of variables $x_i$, $x_j$ in the input:
 $(v_i, u_j)$;
$(u_j, v_i)$ [this together with preceding traces ensures that $v_i$ and $u_i$ are not in $X_C$ or $X_P$];
\item for each clause $C$ in the input over variables $x_j,x_k,x_l$:
$(s, z_j$, $z_k, z_l, t)$ where $z_i$ is $u_i$ if $x_i$ occurs in $C$ positively, and
$v_i$ if it occurs negatively.
\end{itemize}

The reduction is polynomial in the number of variables (quadratic) and clauses (linear).

We claim that there exists an assignment $f$ of $0,1$ to $x_1,\ldots,x_m$ satisfying
$C_1,\ldots,C_n$ if, and only if, there is a solution to the prohibition synthesis problem
above where $X_C = \{s\}$, $X_P=\{t\}$, and for every $i$, $u_i \in X_D$ iff $f(x_i)=1$
and $v_i \in X_D$ iff $f(x_i)=0$.

Assume an assignment $f$ satisfying $C_1,\ldots,C_n$ exists. Let $X_C=\{s\}$
and $X_P=\{t\}$. For every $i$, place $u_i$ in $X_D$ if $f(x_i)=1$ and $v_i \in X_d$ iff $f(x_i)=0$. This produces a solution because: $s,t$ satisfies \textbf{Neg}; for every $i$, either
$u_i$ or $v_i$ are not in $X_D$, so  $s, v_i, t, s, u_i, t$ satisfies \textbf{Neg}.
Positive traces satisfy \textbf{Pos}: either $s$ followed by $t$ does not occur on a trace, or
$u_i$, $v_i$ occur between $s$ and $t$ and one of them is in $X_D$, or (from the clause encoding) one
of the literals in the clause is true, so for positive $x_i$ it means that $u_i$ is in $X_D$
and \textbf{Pos} is satisfied, or for negative $\neg x_i$ it means that $v_i$ is in $X_D$ and
again \textbf{Pos} is satisfied.

Assume there is a solution to the prohibition synthesis problem. It is clear (see the comments in square brackets above) that it has to be of the form $X_C=\{s\}$, $X_P=\{t\}$ and $X_D$ containing some $u_i$s and $v_i$s.
In particular, since $(s, v_i, u_i, t)$ is a positive trace, for every $i$ either $u_i$ or $v_i$
has to be not in $X_D$. Set $f(x_i)$ to be 1 if $u_i$ in $X_D$ and 0 otherwise.
Then each clause $C= \{\sim x_j,\sim x_k,\sim x_l\}$ 
(where $\sim x_{j}$ denotes $x_j$ if it occurs positively or $\neg x_j$ if it occurs negatively) is satisfied by $f$ since for every
clause there will be one literal which is true. This is because $(s, z_j, z_k, z_l, t)$ is a
positive trace, and either for some positive literal $x_i$, $u_i$ is in $X_D$, or for some
negative literal $\neg x_i$, $v_i$ is in $X_D$, so $u_i$ is not in $X_D$, so $f(\neg x_i)=1$.
\end{proof}

\subsection{Obligation Synthesis}

We now consider the \emph{obligation synthesis problem}.

\begin{definition}
The \emph{obligation synthesis problem} is the following decision problem:
\begin{description}
\item[Instance] A finite set of propositions $L$,
a finite set $\Gamma$ of finite traces partitioned into $\Gamma_T$ and $\Gamma_F$, 
where each trace is given as a sequence of state
descriptions.
\item[Question] Are there three propositional formulas $\phi_C$, $\phi_O$, and $\phi_D$ over $L$ such that
\begin{description}
\item[Neg] every trace in $\Gamma_F$ violates $(\phi_C, O(\phi_O), \phi_D)$
\item[Pos] no trace in $\Gamma_T$ violates $(\phi_C, O(\phi_O), \phi_D)$
\end{description}
\end{description}
\end{definition}

Analogously to the prohibition synthesis problem, the obligation synthesis problem can be equivalently restated in terms of states: are there three sets of states $X_C$, $X_O$ and $X_D$ such that:
\begin{description}
\item[Neg] For every trace $\rho \in \Gamma_F$, there exist $i$ and $j$ with $i \leq j$ such that
$\rho[i] \in X_C$, $\rho[j] \in X_D$, and there is no $k$ with $i \leq  k \leq j$ such that $\rho[k] \in X_O$
\item[Pos] For every trace $\rho \in \Gamma_T$, if for some $i$ and $j$, $i \leq j$, $\rho[i] \in X_C$,
$\rho[j] \in X_D$, then there exists $k$ such that $i \leq k \leq j$ and $\rho[k] \in X_O$.
\end{description}

\begin{theorem}\label{thm:obligation-synthesis}
The obligation synthesis problem is NP-complete.
\end{theorem}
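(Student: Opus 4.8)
The plan is to prove NP-completeness of obligation synthesis by the same two-part strategy used for prohibition synthesis. Membership in NP is immediate: a nondeterministic machine guesses the sets $X_C$, $X_O$, $X_D$ (equivalently, the formulas $\phi_C$, $\phi_O$, $\phi_D$ as disjunctions of state descriptions) and verifies the \textbf{Neg} and \textbf{Pos} conditions in polynomial time, since checking violation of a conditional obligation on a finite trace is linear. For hardness I would again reduce 3SAT, reusing as much of the prohibition construction as possible but adapting the gadgets to the obligation violation semantics, where the relevant window runs from a detachment state $\rho[i]\in X_C$ to a deadline state $\rho[j]\in X_D$ (with $j$ forced to be the last state, since violation of an obligation requires $\phi_D$ true at $s_m$), and violation means no $X_O$-state occurs in the closed interval $[i,j]$.

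Concretely, I would keep technical states $s$ and $t$, and for each variable $x_i$ two states $u_i$ (meaning $x_i$ true) and $v_i$ (meaning $x_i$ false), with the intended solution $X_C=\{s\}$, $X_D=\{t\}$, and $u_i\in X_O$ iff $f(x_i)=1$, $v_i\in X_O$ iff $f(x_i)=0$. Negative traces should force $s\in X_C$, $t\in X_D$ (e.g.\ a trace $(s,t)$, together with singleton positive traces $(s)$ and $(t)$ and small positive traces that prevent $s$ or $t$ from being in $X_O$, and prevent the $u_i,v_i$ from being in $X_C$ or $X_D$), and for each $x_i$ a negative trace analogous to $(s,v_i,t,s,u_i,t)$ that forces at least one of $u_i,v_i$ to be outside $X_O$. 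A positive trace analogous to $(s,v_i,u_i,t)$ then forces at least one of $u_i,v_i$ to be inside $X_O$, pinning down exactly one per variable and hence a genuine truth assignment. Finally, for each clause $C$ over $x_j,x_k,x_l$ a positive trace $(s,z_j,z_k,z_l,t)$ — with $z_i=u_i$ if $x_i$ occurs positively in $C$ and $z_i=v_i$ if negatively — enforces, via \textbf{Pos}, that some literal of $C$ is satisfied: some $z_i$ must lie in $X_O$, which corresponds to the chosen assignment making that literal true.

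I would then argue correctness in the two standard directions, mirroring the prohibition proof: from a satisfying assignment $f$, the described sets satisfy \textbf{Neg} (the $(s,t)$ trace and each $(s,v_i,t,s,u_i,t)$ trace have a detachment–deadline window with no $X_O$-state inside) and \textbf{Pos} (every positive trace either lacks an $s$-before-$t$ pattern, or contains a $u_i/v_i$ that is in $X_O$, or — for clause traces — contains a literal-state in $X_O$ by satisfaction of $C$); conversely, from any solution the square-bracket-style observations force the canonical shape of $X_C,X_O,X_D$, the variable gadgets yield a well-defined assignment, and the clause traces show it satisfies every clause. The reduction is polynomial (quadratically many auxiliary traces in the number of variables, linearly many in the number of clauses).

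The main obstacle I anticipate is the shift in window semantics relative to the prohibition case: for obligations the deadline state is included in the forbidden interval and, on negative traces, must be the \emph{final} state of the trace, so the ``double'' gadget trace and the auxiliary positive traces that localize the roles of $s$, $t$, $u_i$, $v_i$ must be re-checked carefully to make sure (i) no unintended detachment–deadline window appears on a positive trace, (ii) the forced membership $s\in X_C$, $t\in X_D$ genuinely follows, and (iii) the clause traces end in a state in $X_D$ so that \textbf{Pos} actually bites. Getting these boundary conditions exactly right — especially ensuring that short positive traces rule out $u_i,v_i\in X_C\cup X_D$ without accidentally also ruling out the intended assignments — is where the real care is needed; the rest is a routine adaptation of Theorem~\ref{thm:prohibition-synthesis}.
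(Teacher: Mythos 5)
Your proposal is correct and follows essentially the same route as the paper's proof: membership in NP by guess-and-check, and NP-hardness via the same 3SAT reduction with auxiliary states $s,t$, variable states $u_i,v_i$, negative traces $(s,t)$ and $(s,v_i,t,s,u_i,t)$, and positive traces $(s)$, $(t)$, $(s,v_i,u_i,t)$ and the clause traces, with the same correctness argument in both directions. The only cosmetic difference is that you allow for a few extra auxiliary positive traces (making the reduction quadratic rather than linear as in the paper), which does no harm.
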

\begin{proof}
The obligation synthesis problem is clearly in NP.
To prove that it is NP-hard, we again use a reduction from the 3SAT problem.

As before, consider a set of clauses $C_1,\ldots,C_n$ over variables $x_1,\ldots,x_m$, which is an instance of  3SAT. We generate an instance of the obligation synthesis problem such that it has a solution iff $C_1,\ldots,C_n$ are satisfiable.
The idea of the reduction is similar to that for prohibitions. 
We use two auxiliary states $s$ and $t$, intuitively to serve as the detachment condition and the deadline, and make sure that neither  of them is also the obligation, 
but now instead of inserting a deadline between $s$ and $t$ in positive traces, we insert an obligation.
We want to make some subset of $\{v_i: i \in [1,...m]\} \cup \{u_i: i \in [1,...m]\}$ to be the obligation ($X_O$), so that exactly one of $v_i$, $u_i$ for each $i$ is in $X_O$.
Then $u_i \in X_O$ can encode that $x_i$ is true, and $v_i \in X_O$ that $x_i$ is false, and
we can make the encoding work by creating a positive trace corresponding to each
clause so that at least one of the literals in the clause should be true.

The set of negative traces contains: 
\begin{itemize}
\item a 2 state trace $(s,t)$ [this forces either \\
$s \in X_C \cap \overline{X_D} \cap \overline{X_O}$, $t \in X_D \cap \overline{X_C} \cap \overline{X_O}$, or \\
$s \in X_C \cap X_D \cap \overline{X_O}$, or \\
$t \in X_C \cap X_D \cap \overline{X_O}$. 

To rule out the latter two possibilities, we
require below that $s$ and $t$ on their own are positive traces.]  
\item for every variable $x_i$ in the input, a trace $(s, v_i, t, s, u_i, t)$ [this ensures that either $v_i$ or $u_i$ are not in $X_O$, because there is one $(s,..,t)$ sub-trace that does not contain a state from $X_O$].
\end{itemize}
The set of positive traces contains:
\begin{itemize}
\item a one state trace $(s)$ [so $s$ cannot be in $X_C \cap X_D \cap \overline{X_O}$]
\item a one state trace $(t)$ [so $t$ cannot be in $X_C \cap X_D \cap \overline{X_O}$]
\item for every variable $x_i$ in the input, a trace
$(s, v_i, u_i, t)$ [this ensures that either $v_i$ or $u_i$ are in $X_O$]
\item for each clause $C$ in the input over variables $x_j,x_k,x_l$, a 
trace
$(s, z_j, z_k, z_l, t)$ where $z_i$ is $u_i$ if $x_i$ occurs in $C$ positively, and
$v_i$ if it occurs negatively.  
\end{itemize}

The reduction is linear in the number of variables and clauses.

We claim that there exists an assignment $f$ of $0,1$ to $x_1,\ldots,x_m$ 
satisfying $C_1,\ldots,C_n$ if, and only if, there is a solution to the obligation synthesis problem 
above where $s \in X_C$, $t \in X_D$, and for every $i$, $u_i \in X_O$ iff $f(x_i)=1$ 
and $v_i \in X_O$ iff $f(x_i)=0$.
The proof of this claim is analogous to that of Theorem \ref{thm:prohibition-synthesis}.

Assume an assignment $f$ satisfying $C_1,\ldots,C_n$ exists. Let $X_C=\{s\}$
and $X_D=\{t\}$. For every $i$, place $u_i$ in $X_O$ iff $f(x_i)=1$ and $v_i \in X_O$ iff $f(x_i)=0$. It is easy to check that this is a solution to the obligation synthesis problem.


Assume there is a solution to the obligation synthesis problem. It is clear (see the comments in brackets above) that any solution should satisfy
$s \in X_C \cap \overline{X_D} \cap \overline{X_O}$ and $t \in X_D \cap \overline{X_C} \cap \overline{X_O}$.
Since $(s, v_i, t, s, u_i, t)$ is a negative trace for every $i$, this means that
it contains an unsatisfied conditional obligation.
This means that for every
$i$, either $v_i$ or $u_i$ is not in $X_O$.
%
Since $(s, v_i, u_i, t)$ is a positive trace, then in any solution, for every $i$, 
either $u_i$ or $v_i$ has to be in $X_O$. 
Hence we can use the membership in $X_O$ to produce a boolean valuation of
variables $x_i$ (1 if $u_i \in X_O$, and 0 if $v_i \in X_O$).
Since for every clause $C= \{\sim x_j, \sim x_k, \sim x_l\}$, the trace
$(s, z_j, z_k, z_l, t)$ (where $z_i$ is $v_i$ if $\sim x_i = \neg x_i$, and
$u_i$ if $\sim x_i = x_i$) is a positive trace, at least one of $z_i$ is
in $X_O$. This means that the valuation based on the membership in
$X_O$ 
satisfies all the clauses (since at least one literal in each clause will evaluate to 1).
\end{proof}

From Theorems \ref{thm:prohibition-synthesis} and \ref{thm:obligation-synthesis}, it follows immediately that the problem of generating a set of norms of size at most $m$ that correctly classifies a set of traces $\Gamma$ (that is, no norm $N_{i}$ is violated on any trace in $\Gamma_{T}$, and all traces in $\Gamma_{F}$ violates at least one norm $N_{j}$) is also NP-complete. 

\section{Complexity of Minimal Revision}

In this section, we consider the problem of (minimally) revising conditional prohibitions (we omit the treatment of obligations, which is analogous).
This problem is often relevant when there is an existing norm that does not achieve the system objective, and we wish a minimal modification of the existing norm that does achieve the objective. 

Assume we are given a set of traces and a conditional prohibition $(\phi_C, P(\phi_P), \phi_D)$, and need to change it in a minimal way so that it classifies the traces correctly. 
The editing distance between conditional prohibitions can be defined in various ways, e.g., for formulas $\phi_C, \phi_P, \phi_D$ in disjunctive normal form, this could be the sum of the numbers of added and removed disjuncts for all three formulas. 
Note that the set of non-equivalent propositional formulas built from the set $L$ is finite, and so is the number of possible different conditional prohibitions (or obligations). Regardless of how the distance between different conditional norms is defined, for a fixed set of propositional variables $L$ there is a maximal editing distance $max(L)$ between any two norms using formulas over $L$. 

Given some distance measure $dist$ defined for any two conditional prohibitions $\alpha_1$ and $\alpha_2$ over $L$, the decision problem for minimal revision 
can be stated as:

\begin{definition}
The (decision form) of the minimal prohibition revision problem is as follows:
\begin{description}
\item[Instance] A number $m$; a finite set of propositions $L$;
a finite set $\Gamma$ of finite traces partitioned into $\Gamma_T$ and $\Gamma_F$; a conditional prohibition $(\phi_C, P(\phi_P), \phi_D)$ over $L$.
\item[Question] Are there three propositional formulas $\phi'_C$, $\phi'_P$, and $\phi_D'$ over $L$ such that
\begin{description}
\item[Dist] $dist((\phi_C, P(\phi_P), \phi_D), (\phi'_C, P(\phi'_P), \phi'_D)) \leq m$
\item[Neg] every trace in $\Gamma_F$ violates $(\phi'_C, P(\phi'_P), \phi'_D)$
\item[Pos] no trace in $\Gamma_T$ violates $(\phi'_C, P(\phi'_P), \phi'_D)$
\end{description}
\end{description}
\end{definition}

\begin{theorem}
Let $dist(\alpha_1,\alpha_2)$ be computable in time polynomial in the size of
$\alpha_1$ and $\alpha_2$, and the range of $dist$ over norms built over propositions from $L$ be bounded by $max(L)$.
Then the minimal prohibition revision problem is NP-complete.
\end{theorem}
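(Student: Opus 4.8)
The plan is to handle membership in NP directly and then to obtain NP-hardness by a reduction from the prohibition synthesis problem, which is NP-hard by Theorem~\ref{thm:prohibition-synthesis}. The observation driving the hardness part is that, if the budget $m$ is taken to be as large as the whole range of $dist$, then the constraint \textbf{Dist} is satisfied by every candidate revision, so the revision problem degenerates to the synthesis problem.

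\textbf{Membership in NP.} A nondeterministic machine guesses the three revised formulas $\phi'_C$, $\phi'_P$, $\phi'_D$. As for the synthesis problems, \textbf{Neg} and \textbf{Pos} depend only on which states of $S(\Gamma)$ satisfy each formula, so we may assume each guessed formula is a disjunction of state descriptions of states in $S(\Gamma)$; such formulas have size linear in $|S(\Gamma)|$, hence polynomial in the input. Checking \textbf{Neg} and \textbf{Pos} takes linear time per trace (the violation condition is an LTL property evaluable in linear time on finite traces). Checking \textbf{Dist} requires one evaluation of $dist$ on two norms of polynomial size, which by hypothesis takes polynomial time. Thus the problem is in NP.

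\textbf{NP-hardness.} Reduce from prohibition synthesis. Given an instance $(L,\Gamma,\Gamma_T,\Gamma_F)$ of prohibition synthesis, output the minimal prohibition revision instance with the same $L,\Gamma,\Gamma_T,\Gamma_F$, with the norm to be revised being any fixed conditional prohibition $\alpha_0$ over $L$, and with budget $m:=max(L)$. By the hypothesis that the range of $dist$ over norms built from $L$ is bounded by $max(L)$, every conditional prohibition over $L$ lies within distance $m$ of $\alpha_0$, so \textbf{Dist} holds for every triple $(\phi'_C,\phi'_P,\phi'_D)$. Hence the constructed instance is a yes-instance iff there are formulas $\phi'_C,\phi'_P,\phi'_D$ over $L$ satisfying \textbf{Neg} and \textbf{Pos}, i.e.\ iff the original synthesis instance is a yes-instance. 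Since prohibition synthesis is NP-hard, so is minimal prohibition revision; combined with membership, the problem is NP-complete.

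\textbf{Where the work is.} The substantive point is making sure the reduction is polynomial: the only genuinely new component of the output instance is the number $m=max(L)$, so we must be able to produce (a valid upper bound on) $max(L)$ in polynomial time and write it in polynomial space. For the editing distances one actually uses --- e.g.\ the sum over the three DNF formulas of the numbers of disjuncts added and removed --- the range is at most exponential in $|L|$, so $m$ has a polynomial-length binary representation and the reduction goes through; for an arbitrary $dist$ one simply takes $m$ to be any polynomially writable bound guaranteed by the hypothesis. A minor secondary issue, already resolved in the membership argument, is that the NP witness can be taken polynomially bounded: restricting the revised formulas to disjunctions of state descriptions over $S(\Gamma)$ does not affect \textbf{Neg} or \textbf{Pos}, and --- for the distances considered, and a fortiori in the hardness reduction where the budget is maximal --- does not push the distance to the original norm past $m$.
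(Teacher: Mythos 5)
Your proposal is correct and follows essentially the same route as the paper: membership by guess-and-check, and NP-hardness by reducing prohibition synthesis to revision with budget $m = max(L)$ so that the \textbf{Dist} constraint becomes vacuous. Your additional care about the polynomial size of the witness formulas and of the binary representation of $m$ goes beyond the paper's (very terse) argument but does not change the approach.
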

\begin{proof}
The membership in NP follows from the fact that a solution can be guessed and checked in polynomial time.

NP-hardness is by reduction from the prohibition synthesis problem. Note that if a solution to the prohibition synthesis problem exists, it will be at most at distance $max(L)$ from the input norm. So to solve the prohibition synthesis problem, we can ask for a solution to the minimal prohibition revision problem with $m = max(L)$.
\end{proof}

\section{Conclusions}
\label{sec:concl}

We considered the problem of synthesising and minimally revising norms to achieve a system objective from labelled traces of agent behaviour.  We showed that the problems of norm synthesis and revision are  NP-complete. In future work, we plan to investigate the synthesis of approximate norms (i.e., norms that do not classify all traces perfectly), and more tractable heuristic approaches to norm synthesis and revision.

\bibliographystyle{kr}
\bibliography{bibliography}

\end{document}